\title{A reduction of the logspace shortest path problem to biconnected graphs}
\titlerunning{Logspace shortest path algorithms}
\author{Boris Brimkov}
\authorrunning{B. Brimkov}
\institute{
Computational \& Applied Mathematics, Rice University, Houston, TX 77005 USA\\
\email{boris.brimkov@rice.edu}}
\begin{document}
\maketitle

\begin{abstract}
In this paper, we reduce the logspace shortest path problem to biconnected graphs; in particular, we present a logspace shortest path algorithm for general graphs which uses a logspace shortest path oracle for biconnected graphs. We also present a linear time logspace shortest path algorithm for graphs with bounded vertex degree and biconnected component size, which does not rely on an oracle. The asymptotic time-space product of this algorithm is the best possible among all shortest path algorithms.
\medskip
 
{\bf Keywords:} logspace algorithm, shortest path, biconnected graph, bounded degree

\end{abstract}
\section{Introduction}

The logspace computational model entails algorithms which use a read-only input array and $O(\log n)$ working memory. 
For general graphs, there is no known deterministic logspace algorithm for the shortest path problem. In fact, the shortest path problem is NL-complete, so the existence of a logspace algorithm would imply that L=NL \cite{jakoby1}. In this paper, we reduce the logspace shortest path problem to biconnected graphs, and present a linear time logspace shortest path algorithm for parameter-constrained graphs.

An important result under the logspace computational model which is used in the sequel is Reingold's deterministic polynomial time algorithm \cite{reingold} for the undirected $st$-connectivity problem (USTCON) of determining whether two vertices in an undirected graph belong to the same connected component. There are a number of randomized logspace algorithms for USTCON (see, for example, \cite{barnes_feige,feige,kosowski}) which perform faster than Reingold's algorithm but whose output may be incorrect with a certain probability. 

There are also a number of logspace algorithms for the shortest path problem and other graph problems on special types of graphs (see \cite{asano6,brimkov2,ktrees,jakoby1,munro_ramirez}). 
As a rule, due to time-space trade-off, improved space-efficiency is achieved on the account of higher time-complexity. Often the trade-off is rather large, yielding time complexities of $O(n^c)$ ``for some constant $c$ significantly larger than 1" \cite{jakoby2}. In particular, the time complexity of Reingold's USTCON algorithm remains largely uncharted but is possibly of very high order. The linear time logspace shortest path algorithm presented in this paper avoids this shortcoming, at the expense of some loss of generality. In fact, its time (and space) complexity is the best possible, since a hypothetical sublinear-time algorithm would fail to print a shortest path of length $\Omega(n)$.

This paper is organized as follows. In the next section, we recall some basic definitions and introduce a few concepts which will be used in the sequel. In Section~3, we present a reduction of the logspace shortest path algorithm to biconnected graphs. In Section~4, we present a linear time logspace algorithm for parameter-constrained graphs. We conclude with some final remarks in Section~5.

\section{Preliminaries}

A \emph{logspace algorithm} is an algorithm which uses $O(\log n)$ working memory, where $n$ is the size of the input. In addition, the input and output are respectively read-only and write-only, and do not count toward the space used. The \emph{shortest path problem} requires finding a path between two given vertices $s$ and $t$ in a graph $G$, such that the sum of the weights of the edges constituting the path is as small as possible. 
In general, if $s$ and $t$ are not in the same connected component, or if the connected component containing $s$ and $t$ also contains a negative-weight cycle, the shortest path does not exist. For simplicity, we will assume there are no negative-weight cycles in $G$, although the proposed algorithms can be easily modified to detect (and terminate at) such cycles without any increase in overall complexity. We will also assume that $G$ is encoded by its adjacency list, where vertices are labeled with the first $n$ natural numbers. The $j^{\text{th}}$ neighbor of vertex $i$ is accessed with $Adj(i,j)$ in $O(1)$ time, and \emph{degree}$(i)=|Adj(i)|$.

An \emph{articulation point} in $G$ is a vertex whose deletion increases the number of connected components of $G$. A \emph{block} is a maximal subgraph of $G$ which has no articulation points; if $G$ has a single block, then $G$ is \emph{biconnected}. The \emph{block tree} $T$ of $G$ is the bipartite graph with parts $A$ and $B$, where $A$ is the set of articulation points of $G$ and $B$ is the set of blocks of $G$; $a\in A$ is adjacent to $b \in B$ if and only if $b$ contains $a$. We define the \emph{id} of a block in $G$ to be 
$(\emph{largest, smallest})$,
where \emph{largest} and \emph{smallest} are the largest and smallest vertices in the block with respect to their labeling from 1 to $n$. Clearly, each block in $G$ has a unique \emph{id}. Note also that it is possible to lexicographically compare the $id$s of two or more blocks, i.e., if $id_1=(\ell_1,s_1)$ and $id_2=(\ell_2,s_2)$, then $id_1>id_2$ if $\ell_1> \ell_2$ or if $\ell_1= \ell_2$ and $s_1>s_2$.

Given numbers $a_1$, $a_2$, and $p$, we define the \emph{next} number after $p$ as follows:

\begin{equation*}
next(a_1,a_2,p)=\begin{cases}
a_1&\text{ if }a_2\leq p<a_1\text{ or }p<a_1\leq a_2\text{  or }a_1 \leq a_2 \leq p\\
a_2&\text{  otherwise.}
\end{cases}
\end{equation*}


\noindent We extend this definition to a list $L$ of not necessarily distinct numbers by defining the \emph{next} number in $L$ after $p$ to be a number in $L$ larger than $p$ by the smallest amount, or if no such number exists, to be the smallest number in $L$. The \emph{next} number in $L$ can be found with logspace and $O(|L|)$ time, given sequential access to the elements of $L$, by repeatedly applying the \emph{next} function. 

%


\section{Reducing the logspace shortest path problem to biconnected graphs}


Let \emph{connected}$(H;v_1,v_2)$ be an implementation of Reingold's USTCON algorithm which takes in two vertices of a graph $H$ and returns \emph{true} if they belong to the same connected component, and \emph{false} otherwise. Let \emph{pathInBlock}$(H;v_1,v_2)$ be a polynomial time, logspace oracle which takes in two vertices of a biconnected graph $H$ and prints the shortest path between them.

Clearly, the encoding of a graph $H$ can be reduced with logspace and polynomial time to the encoding of some induced subgraph $H[S]$. Thus, by transitivity and closure of reductions, the functions \emph{connected}$(H[S];v_1,v_2)$ and \emph{pathInBlock}$(H[B];v_1,v_2)$ can be used with logspace and polynomial time, where $S$ and $B$ are sets of vertices computed at runtime and $H[B]$ is biconnected.

The \emph{connected} function reduces the logspace shortest path problem to connected graphs. In this section, we will further reduce this problem to biconnected graphs, by presenting a logspace algorithm for finding the shortest path between two vertices in an arbitrary graph using the oracle \emph{pathInBlock}. 


\subsection{Constructing a logspace traversal function}

Let $G$ be a graph of order $n$, and $v_1$ and $v_2$ be two vertices that belong to the same block; the set of all vertices in this block will be referred to as \emph{block}$(v_1,v_2)$. Using the \emph{connected} function, is easy to construct a logspace function \emph{isInBlock}$(v_1,v_2,v)$ which returns \emph{true} when $v$ is part of \emph{block}$(v_1,v_2)$ and \emph{false} otherwise; see Table 1 for pseudocode. This procedure can be used to access the vertices in $block(v_1,v_2)$ sequentially. A similar procedure \emph{areInBlock}$(u,v)$ can be defined which returns \emph{true} when $u$ and $v$ are in the same block, and \emph{false} otherwise.

A vertex of $G$ is an articulation point if and only if two of its neighbors are not in the same block. Thus, using the \emph{isInBlock} function, we can construct a function \emph{isArticulation}$(v)$ which returns \emph{true} when $v$ is an articulation point and \emph{false} otherwise; see Table 1 for pseudocode. We also define the function $id(v_1,v_2)$, which goes through the vertices of \emph{block}$(v_1,v_2)$ and returns (\emph{largest, smallest}), where \emph{largest} and \emph{smallest} are respectively the largest and smallest vertices in \emph{block}$(v_1,v_2)$ according to their labeling.

Let $p$ be an articulation point\footnote{The subsequent definitions and functions remain valid when $p$ is not an articulation point, and can be used in special cases, e.g., when $G$ only has one block.} in \emph{block}$(v_1,v_2)$. To find the \emph{next} articulation point in \emph{block}$(v_1,v_2)$ after $p$, we can create a function \emph{nextArticulation}$(v_1,v_2,p)$ which uses each articulation point in \emph{block}$(v_1,v_2)$ as a member of list $L$ and applies the \emph{next} function. Note that the vertices in $L$ do not have to be stored, but can be generated one at a time; see Table 1 for pseudocode. Similarly, to identify the block containing $p$ and having the \emph{next} $id$ after $id(v_1,v_2)$, we can create a function \emph{nextBlock}$(v_1,v_2,p)$ which uses the \emph{id}s of the blocks identified by $p$ and each of its neighbors as members of a list $L$ and applies the \emph{next} function. Note that the \emph{id}s in $L$ do not have to be stored but can be computed one at a time; see Table 1 for pseudocode.

Finally, given articulation point $p$ and vertex $v$ in the same block, we will call the component of $G-\{block(v,p)\backslash\{p\}\}$ which contains $p$ the \emph{subgraph of G rooted at block$(v,p)$ containing p}, or \emph{subgraph}$(v,p)$. This subgraph can be traversed with logspace by starting from $p$ and repeatedly moving to the \emph{next} block and to the \emph{next} articulation point until the starting block is reached again. This procedure indeed gives a traversal, since it corresponds to visiting the \emph{next} neighbor in the block tree $T$ of $G$, which generates an Euler subtour traversal (cf. \cite{tarjan_vishkin}). In addition, during the traversal of \emph{subgraph}$(v,p)$, each vertex can be compared to a given vertex $t$, in order to determine whether the subgraph contains $t$. Thus, we can create a function \emph{isInSubgraph}$(v,p,t)$ which returns \emph{true} if $t$ is in \emph{subgraph}$(v,p)$ and \emph{false} otherwise; see Table 1 for pseudocode.

\begin{algorithm2e}
\NoCaptionOfAlgo
\caption{\textbf{Table 1:} List of subroutines}

\SetKwProg{Fn}{function}{}{end}
\Fn{isInBlock$(v_1,v_2,v)$}{
\textbf{if} $v=v_1$ \textbf{or} $v=v_2$ \textbf{then return} \emph{true}\;

\For{$i=1$ \emph{\KwTo} $n$, $i\neq v_1$, $i\neq v$}{
\textbf{if} $\lnot \emph{connected}(G-i;v_1,v)$ \textbf{then return} \emph{false}\;
}

\textbf{if} $\lnot \emph{connected}(G-v_1;v_2,v)$ \textbf{then return} \emph{false}\;
\Return{true}\;
}

\SetKwProg{Fn}{function}{}{end}
\Fn{isArticulation$(v)$}{
\For{$i=2$ \emph{\KwTo} degree$(v)$}{
\textbf{if} $\lnot \emph{isInBlock}(v,Adj(v,1),Adj(v,i))$ \textbf{then return} \emph{true}\;
}
\Return{false}\;
}

%

%
%
%
%
%

\SetKwProg{Fn}{function}{}{end}
\Fn{nextArticulation$(v_1,v_2,p)$}{
  $a=p$\;

\For{$v=1$ \emph{\KwTo} $n$}{
\textbf{if} \emph{isInBlock}$(v_1,v_2,v)$ \textbf{and} \emph{isArticulation}$(v)$ \textbf{then} $a=next(a,v,p)$;
}
\Return{$a$}\;

}

\SetKwProg{Fn}{function}{}{end}
\Fn{nextBlock$(v_1,v_2,p)$}{
  $id^*=id(v_1,v_2)$; $a=id(p,Adj(p,1))$; $i^*=1$\;
  \For{$i=2$ \emph{\KwTo} degree$(p)$}{
    $b=id(p,Adj(p,i))$\;
\textbf{if} $a\leq id^*<b$ \textbf{or} $id^*<b\leq a$ \textbf{or}  $b \leq a \leq id^*$ \textbf{then} $a=b$; $i^*=i$\;

  }
\Return{$Adj(p,i^*)$}\;
}


\SetKwProg{Fn}{function}{}{end}
\Fn{isInSubgraph$(v,p,t)$}{

$v_1=p$; $v_2=v$; $id^*=id(v_1,v_2)$; $v_2=\emph{nextBlock}(v_1,v_2,p)$\;

\While{$id(v_1,v_2)\neq id^*$}{
$p=\emph{nextArticulation}(v_1,v_2,p)$; $v=\emph{nextBlock}(v_1,v_2,p)$\;
\textbf{if} \emph{isInBlock}$(v_1,v_2,t)$ \textbf{then return} \emph{true}\;
$v_1=p$; $v_2=v$\;
} 
\Return{false}\;
}

%
%

\end{algorithm2e}

\subsection{Main Algorithm}


Using the subroutines outlined in the previous section and the oracle \emph{pathInBlock}, we propose the following logspace algorithm for finding the shortest path in a graph $G$. The main idea is to print the shortest path one block at a time by locating $t$ in one of the subgraphs rooted at the current block.

\begin{algorithm2e}[H]
\SetKwFor{Repeat}{repeat}{}{end}
\SetKwFor{When}{when}{do}{}

\SetKwProg{Fn}{function}{}{end}
\Fn{\emph{shortestPath}$(G;s,t)$}{
\textbf{if} $\neg connected(G;s,t)$ \textbf{then return} ``Path does not exist"\;
$current=s$\;

\textbf{\emph{start}}\\
\If{$areInBlock(current,t)$}{ \Return{ $pathInBlock(G[block(current,t)];current,t)$\;}}

\For{$i=1$ \emph{\KwTo} $degree(current)$} {

\For{$potential = 1$ \emph{\KwTo} $n$}{ 
  \If{ isInBlock$($current, $Adj(\text{current},i)$, potential$)$ \emph{\textbf{and}}  $isArticulation(potential)$ \emph{\textbf{and}} $isInSubgraph(current,potential,t)$}{ 
$pathInBlock(G[block(current,potential)];current,potential)$;
$current=potential$\;
\textbf{goto \emph{start}}\;
}
}
}
}
\NoCaptionOfAlgo
\caption{\textbf{Algorithm 1:} Shortest path using oracle}
\end{algorithm2e}

\begin{theorem}
Algorithm 1 finds the correct shortest path between vertices $s$ and $t$ in graph $G$ with logspace and polynomial time, using a shortest path oracle for biconnected graphs.
\end{theorem}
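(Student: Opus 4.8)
The plan is to establish three things in turn: correctness of the printed path, the logarithmic space bound, and the polynomial time bound. The core of the argument is a structural fact about blocks, which I would isolate first. We may assume $G$ is connected, since otherwise the opening call to \emph{connected} returns the correct negative answer by the correctness of Reingold's algorithm. Let $B_1,a_1,B_2,a_2,\ldots,a_{k-1},B_k$ be the unique path in the block tree $T$ between a block containing $s$ and a block containing $t$, indexed so that $s\in B_1$, $t\in B_k$, and $a_i$ is the articulation point shared by $B_i$ and $B_{i+1}$; set $a_0=s$ and $a_k=t$. The key lemma is that $d_G(a_{i-1},a_i)=d_{G[B_i]}(a_{i-1},a_i)$ for $1\le i\le k$: since there are no negative cycles, a shortest $a_{i-1}$-$a_i$ path in $G$ may be taken simple, and a simple path joining two vertices of a block cannot leave that block — to leave $B_i$ and re-enter it, the path would have to use twice an articulation point of $B_i$, because the subtrees of $T$ hanging off distinct articulation points of $B_i$ are joined in $G$ only through $B_i$. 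Consecutive blocks in the list meet exactly in the articulation point between them, non-consecutive blocks are disjoint (otherwise $T$ would contain a cycle), and each $a_i$ is a cut vertex separating $s$ from $t$; hence every $s$-$t$ walk meets $a_1,\ldots,a_{k-1}$ in this order, so $d_G(s,t)\ge\sum_{i=1}^{k}d_G(a_{i-1},a_i)$, while concatenating shortest paths within $B_1,\ldots,B_k$ produces a simple $s$-$t$ path of exactly that length. Consequently, printing \emph{pathInBlock}$(G[B_i];a_{i-1},a_i)$ for $i=1,\ldots,k-1$ and then \emph{pathInBlock}$(G[B_k];a_{k-1},t)$ outputs a correct shortest path (merging the duplicated junction vertices is purely cosmetic).

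Next I would prove, by induction on the number of times the label \emph{start} is reached, that each time execution is at \emph{start} the variable \emph{current} equals $a_{j-1}$ for the appropriate $j$ and the output so far is a shortest $s$-$a_{j-1}$ path. The base case is $\emph{current}=s=a_0$. For the inductive step: if $\emph{current}=a_{k-1}$, then \emph{areInBlock}$(\emph{current},t)$ holds because both lie in $B_k$, and the algorithm returns \emph{pathInBlock}$(G[B_k];a_{k-1},t)$, completing the decomposition of the first paragraph. If $\emph{current}=a_{j-1}$ with $j<k$, I must check that the conjoined test in the inner loop holds for exactly one value of \emph{potential}, namely $a_j$, paired with the correct block $B_j$: running over the blocks containing $a_{j-1}$ and the articulation points lying in them, the set \emph{subgraph}$(a_{j-1},p)$ — the component of $G-(\emph{block}(a_{j-1},p)\setminus\{p\})$ that contains $p$ — contains $t$ if and only if this block is $B_j$ and $p=a_j$, since for every other choice the removal of that block detaches the $B_{j+1}$-side of $T$, hence $t$, from $p$. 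Here I would also verify that the Euler-tour implementation of \emph{isInSubgraph} actually realizes this set-theoretic description, invoking the Euler-subtour property of the block tree cited earlier. Combined with the first paragraph this gives correctness; termination follows because \emph{current} advances from $a_{j-1}$ to $a_j$, strictly decreasing its distance to $t$ in $T$, so \emph{start} is reached at most $k\le n$ times.

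For the space bound I would enumerate the stored data: the main routine holds only \emph{current}, the loop index $i$, and \emph{potential}, each of $O(\log n)$ bits; each subroutine of Table~1 holds a constant number of vertex-valued, \emph{id}-valued, or counter variables, and generates the lists $L$ of articulation points and of block \emph{id}s one element at a time rather than storing them; and the calls to \emph{connected} and \emph{pathInBlock} are logspace, applied either to an induced subgraph or to $G-i$ after a logspace input reduction, so logspace is preserved under composition. The nesting of subroutine calls has constant depth, and the jump to \emph{start} is a loop reusing the same cells, so nothing accumulates. For the time bound, every subroutine is polynomial: \emph{isInBlock} makes $O(n)$ calls to \emph{connected}; \emph{isArticulation}, \emph{id}, \emph{nextArticulation}, and \emph{nextBlock} make polynomially many calls to polynomial-time routines; and the loop of \emph{isInSubgraph} runs for $O(n)$ iterations, the length of an Euler subtour of $T$. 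The main loop runs at most $n$ times, each pass making $O(n^2)$ subroutine calls and one call to \emph{pathInBlock}; hence the overall running time is polynomial.

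I expect the main obstacle to be the correctness step of the second paragraph: pinning down that the three conjoined tests select the unique next articulation point toward $t$, and in particular reconciling the intended meaning of \emph{subgraph}$(v,p)$ with the actual Euler-tour code of \emph{isInSubgraph}, including degenerate cases such as \emph{current} being an articulation point, \emph{potential} coinciding with \emph{current}, or blocks consisting of a single edge. The structural lemma and the resource bounds should be routine once this bookkeeping is in place.
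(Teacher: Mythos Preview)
Your proposal is correct and follows essentially the same approach as the paper: both proofs set up the sequence of articulation points along the block-tree path from $s$ to $t$, argue by induction that each pass through \emph{start} advances \emph{current} to the next such point via the \emph{isInSubgraph} test, and then verify the logspace and polynomial-time bounds by composing the subroutines. Your version is in fact somewhat more careful than the paper's, since you explicitly isolate and prove the structural lemma that a shortest path between two vertices of a block stays in that block, and you parametrize directly by the block-tree path $B_1,a_1,\ldots,a_{k-1},B_k$ rather than by the articulation points appearing on the shortest path itself.
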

\begin{proof}

Let $p_0=s$ and $p_{\ell+1}=t$; the shortest path between $p_0$ and $p_{\ell+1}$ is $P=p_0 P_0 p_1 P_1\ldots p_{\ell}P_{\ell} p_{\ell+1}$, where $p_1,\ldots,p_{\ell}$ are articulation points and $P_0,\ldots,P_{\ell}$ are (possibly empty) subpaths which contain no articulation points. Let $b_i=block(p_i,p_{i+1})$ for $0\leq i \leq \ell$, so that \emph{pathInBlock}$(G[b_i];p_i,p_{i+1})=p_iP_ip_{i+1}$.

Suppose the subpath $p_0 P_0 \ldots p_i$, $i\geq 0$, has already been printed and that the vertex $p_i$ is stored in memory. In each iteration of the main loop, the function \emph{isInSubgraph}$(p_i,p,t)$ returns \emph{true} only for $p=p_{i+1}$ when run for all articulation points $p$ in all blocks containing $p_i$. The function \emph{pathInBlock}$(G[b_i],p_i,p_{i+1})$ is then used to print $P_{i+1}$ and $p_{i+1}$. Finally, $p_i$ is replaced in memory by $p_{i+1}$, and this procedure is repeated until $p_{\ell+1}$ is reached. Since the main loop is entered only if the shortest path is of finite length, the algorithm terminates, and since each subpath printed is between two consecutive articulation points of $P$, the output of Algorithm 1 is the correct shortest path between $s$ and $t$.

Since the \emph{connected} function is logspace, the \emph{isInBlock}, \emph{isArticulation} and \emph{isInSubgraph} functions are each logspace. Only a constant number of variables, each of size $O(\log n)$, are simultaneously stored in Algorithm 1, and every function call is to a logspace function (assuming the \emph{pathInBlock} oracle is logspace); thus, the space complexity of Algorithm 1 is $O(\log n)$. Note that since the vertices in \emph{block}$(v_1,v_2)$ cannot be stored in memory simultaneously, a call to the function \emph{pathInBlock}$(G[block(v_1,v_2)],v_1,v_2)$ needs to be realized by a logspace reduction, i.e., the vertices $v_1$ and $v_2$ are stored, and whenever the function \emph{pathInBlock} needs to access an entry of the adjacency list of $G[V(block(v_1,v_2))]$, it recomputes it by going through the vertices of $G$ and using the function \emph{isInBlock}.

Similarly, since the \emph{connected} function uses polynomial time, the \emph{isInBlock}, \emph{isArticulation} and \emph{isInSubgraph} functions each use polynomial time. The main loop is executed at most $O(n)$ times, and each iteration calls a constant number of polynomial time functions (assuming the \emph{pathInBlock} oracle uses polynomial time); thus, the time complexity of Algorithm 1 is $O(n^c)$ for some constant $c$. 
$\square$
\end{proof}

\section{Linear time logspace algorithm for parametrically constrained graphs}

Let \emph{BellmanFord}$(H;v_1,v_2)$ be an implementation of the Bellman-Ford shortest path algorithm \cite{bellman_ford} which takes in two vertices of a graph $H$ and prints out the shortest path between them. Let \emph{HopcroftTarjan}$(H)$ be an implementation of Hopcroft and Tarjan's algorithm \cite{hopcroft_tarjan} which returns all blocks and articulation points of a graph $H$. If the size of $H$ is bounded by a constant, \emph{BellmanFord} and \emph{HopcroftTarjan} can each be used with constant time and  a constant number of memory cells.

Let $G$ be a graph of order $n$ with maximum vertex degree $\Delta$ and maximum biconnected component size $k$. We will regard $\Delta$ and $k$ as fixed constants, independent of $n$. Using these constraints and some additional computational techniques, we will reformulate Algorithm 1 as a linear-time logspace shortest path algorithm which does not rely on an oracle. Asymptotically, both the time and space requirements of this algorithm are the best possible and cannot be improved; see Corollary 1 for more information.

\subsection{Constructing a linear time logspace traversal function}

By the assumption on the structure of $G$, the number of vertices at distance at most $k$ from a specified vertex $v$ is bounded by $\lfloor \frac{\Delta^{k+1}-1}{\Delta-1}\rfloor$. Thus, any operations on a subgraph induced by such a set of vertices can be performed with constant time and a constant number of memory cells, each with size $O(\log n)$; note that since each vertex of $G$ has a bounded number of neighbors, $G[S]$ can be found in constant time for any set $S$ of bounded size. In particular, we can construct a function \emph{blocksContaining}$(v)$ which uses \emph{HopcroftTarjan} to return all blocks containing a given vertex $v$ and all articulation points in these blocks; see below for pseudocode.


\begin{algorithm2e}
\SetKwProg{Fn}{function}{}{end}
\Fn{blocksContaining$(v)$}{

$S=\{v\}$\;
\textbf{for} $i=1$ \textbf{to} $k$ \textbf{do} $S=\bigcup_{v\in S} Adj(v)$\;

$(\mathcal{B},\mathcal{A})=$ \emph{HopcroftTarjan}$(G[S])$\;
\tcp{$\mathcal{B}$ is set of blocks, $\mathcal{A}$ is set of articulation points of $G[S]$}
\Return $\left(B=\{b\in \mathcal{B}:b\cap Adj(v)\neq \emptyset\},A=\mathcal{A}\cap \bigcup_{b\in B}b\right)$\;
}
\NoCaptionOfAlgo
\caption{\textbf{Subroutine:} Finding all blocks containing $v$, and their articulation points}
\end{algorithm2e}

Using the set of blocks and articulation points given by the \emph{blocksContaining} function, we can define functions \emph{isInBlock}$(v_1,v_2,v)$, \emph{areInBlock}$(u,v)$, \emph{isArticulation}$(v)$, \emph{id}$(v_1,v_2)$, \emph{nextArticulation}$(v_1,v_2,p)$, and \emph{nextBlock}$(v_1,v_2,p)$ analogous to the ones described in Section 3, each of which uses $O(\log n)$ space and $O(1)$ time. We can also construct an analogue of \emph{isInSubgraph}$(v,p,t)$, which uses time proportional to the size of \emph{subgraph}$(v,p)$; in particular, the time for traversing the entire graph $G$ via an Euler tour of its block tree is $O(n)$ (provided $G$ is connected) since there are $O(n)$ calls to the \emph{nextArticulation} function and $O(n)$ calls to the \emph{nextBlock} function. 

Finally, it will be convenient to define the following functions: \emph{adjacentPoints}$(v_1,v_2)$ which returns the set of articulation points belonging to blocks containing $v_2$ but not $v_1$ if $v_1\neq v_2$ and the set of articulation points belonging to blocks containing $v_2$ if $v_1=v_2$ (this function is  slight modification of \emph{blocksContaining}); \emph{last}$(L)$ which returns the last element of a list $L$; \emph{traverseComponent}$(s,t)$ which traverses the component containing a vertex $s$ and returns \emph{true} if $t$ is in the same component and \emph{false} otherwise (this function is identical to \emph{isInSubgraph}, with a slight modification in the stopping condition).


\subsection{Linear time logspace shortest path algorithm}

We now present a modified version of Algorithm 1, which uses the subroutines outlined in the previous section as well as some additional computational techniques such as ``simulated parallelization" (introduced by Asano et al. \cite{asano6}) aimed at reducing its runtime.

\begin{algorithm2e}[H]
\SetKwFor{Repeat}{repeat}{}{end}
\SetKwFor{When}{when}{do}{}

\SetKwProg{Fn}{function}{}{end}
\Fn{\emph{shortestPath}$(G;s,t)$}{
\textbf{if} $\neg$ \emph{traverseComponent}$(s,t)$ \textbf{then return} ``Path does not exist"\;
$\emph{previous}=s$; $\emph{current}=s$\;

\emph{\textbf{start}}\\
\If{areInBlock$(\text{current},t)$}{ 
\Return{BellmanFord$(G[block(current,t)];current,t)$\;}}

\For{$potential \in L:=\text{adjacentPoints}(previous,current)$ \textbf{\emph{using two pointers in serial}}}{ 
  \If{$potential=\text{last}(L)$ \emph{\textbf{or}} isInSubgraph$(current,potential,t)$}{
\emph{BellmanFord}$(G[block(current,potential)];current,potential)$;
$previous=current$; $current=potential$\;
\textbf{goto \emph{start}}\;
}
}
}
\NoCaptionOfAlgo
\caption{\textbf{Algorithm 2:} Shortest path in parameter constrained graph}
\end{algorithm2e}

\begin{theorem}
Algorithm 2 finds the correct shortest path between vertices $s$ and $t$ in graph $G$ with bounded degree and biconnected component size using logspace and linear time. 
\end{theorem}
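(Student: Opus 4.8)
The plan is to mirror the structure of the proof of Theorem~1, establishing correctness first and then the complexity bounds, while carefully accounting for the modifications introduced in Algorithm~2: the use of \emph{BellmanFord} in place of the \emph{pathInBlock} oracle, the restriction of the scan to \emph{adjacentPoints}$(previous,current)$ rather than all vertices, the ``two pointers in serial'' (simulated parallelization) device, and the early-exit clause \emph{potential}$=\text{last}(L)$.

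For correctness, I would decompose the shortest path as $P=p_0P_0p_1P_1\ldots p_\ell P_\ell p_{\ell+1}$ with $p_0=s$, $p_{\ell+1}=t$, the $p_i$ articulation points, and each $P_i$ articulation-point-free, exactly as before, and set $b_i=block(p_i,p_{i+1})$. I would argue inductively that if $p_0P_0\ldots p_i$ has been printed and $p_i$ is stored as \emph{current} (with $p_{i-1}$, or $s$ when $i=0$, stored as \emph{previous}), then the loop selects $potential=p_{i+1}$ and prints $P_{i+1}p_{i+1}$ via \emph{BellmanFord}$(G[b_i];p_i,p_{i+1})$. The two substantive points here are: (i) $p_{i+1}\in\text{adjacentPoints}(previous,current)$ — this holds because $p_{i+1}$ is an articulation point lying in a block that contains $p_i=current$ but, since the shortest path does not revisit $b_{i-1}$, can be taken not to be the block $b_{i-1}$ that contained $p_{i-1}=previous$; and (ii) among the points of $L$ scanned, \emph{isInSubgraph}$(current,potential,t)$ is \emph{true} precisely for $potential=p_{i+1}$, so the algorithm commits to the right articulation point, and the \emph{last}$(L)$ clause only fires when $t$ lies in the last candidate subgraph (or when there is a single remaining option), which is consistent. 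I would also note that \emph{BellmanFord} on $G[b_i]$ returns the correct shortest path within the block because $|V(b_i)|\le k$ and, by optimality of $P$, the $p_i$–$p_{i+1}$ portion of any global shortest path stays inside $b_i$; concatenating these pieces yields the global shortest path, and termination follows since $\ell$ is finite once connectivity has been confirmed.

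For the complexity bounds I would invoke the observations from Section~4.1: because $\Delta$ and $k$ are constants, \emph{blocksContaining}, \emph{adjacentPoints}, \emph{areInBlock}, \emph{isArticulation}, \emph{id}, \emph{nextArticulation}, \emph{nextBlock}, and each \emph{BellmanFord} call run in $O(1)$ time using $O(1)$ cells of $O(\log n)$ bits, so the space complexity is $O(\log n)$. For the time bound, the key is that the work of Algorithm~2 is dominated by the traversals performed inside the \emph{isInSubgraph} / \emph{traverseComponent} calls: a single call costs time proportional to the size of the subgraph it explores, and an unguarded implementation could revisit the same subgraph across many iterations. This is exactly what the ``two pointers in serial'' simulated-parallelization trick addresses — I would argue that running the scan of $L$ with two synchronized pointers so that the $j$-th candidate's subgraph traversal is charged against the $(|L|-j)$-th candidate's, together with the fact that each block of $G$ is entered by \emph{nextBlock}/\emph{nextArticulation} only $O(1)$ times over the whole execution (the combined Euler-tour argument of Section~4.1 giving $O(n)$ total \emph{nextArticulation} and \emph{nextBlock} calls), bounds the total traversal work by $O(n)$. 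The outer structure contributes $O(\ell)=O(n)$ block prints of $O(1)$ cost each, so the overall running time is $O(n)$.

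The main obstacle I anticipate is making the simulated-parallelization time analysis rigorous: one must show that the two-pointer serial scan really does prevent the worst case in which locating each $p_{i+1}$ forces a traversal of a large subgraph, and then repeating this over all $\ell$ articulation points would give $\Theta(n^2)$. The clean way to do this is a charging argument — when the two pointers meet, the total time spent is within a constant factor of the size of the \emph{smaller} of the two subgraphs examined from either end, which is at most half the remaining graph, so summing the geometric-like series over the course of the algorithm gives $O(n)$; I would also have to check that the correct candidate $p_{i+1}$ is still detected under the two-pointer schedule (it is, since the scan is exhaustive and stops as soon as either pointer's \emph{isInSubgraph} test succeeds or reaches \emph{last}$(L)$). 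A secondary, more routine, point to get right is the bookkeeping with \emph{previous}: one must verify that excluding the block $b_{i-1}$ via \emph{adjacentPoints}$(previous,current)$ never excludes the genuinely needed $p_{i+1}$, which is guaranteed by shortest paths being simple and never re-entering a block they have already left.
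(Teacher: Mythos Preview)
Your correctness and space arguments align with the paper's proof and are fine. The gap is in the linear-time analysis, precisely at what you flag as the ``main obstacle.''

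Your proposed charging argument---that the two-pointer scan makes the cost at each step a constant times ``the size of the smaller of the two subgraphs examined from either end, which is at most half the remaining graph,'' yielding a geometric series---does not work. There is no guarantee that the subgraph containing $t$ has size at most half of the currently remaining graph; it may be nearly all of it at every step, so no geometric decay occurs, and pairing candidate $j$ against candidate $|L|-j$ does not by itself bound anything.

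The paper's argument is different and simpler. With two interleaved copies of \emph{isInSubgraph} advancing in lockstep, at every moment at least one copy is traversing a subgraph that does \emph{not} contain $t$; when that copy finishes, it moves on to the next unexplored candidate. Hence the total work at step $i$ (identifying $p_{i+1}$ from $p_i$) is at most twice the combined size of the subgraphs rooted at $p_i$ that contain neither $t$ nor $s$ (the $s$-side having already been excluded by \emph{adjacentPoints}$(p_{i-1},p_i)$). These ``side'' subgraphs, taken over all $i$ along the block path from $s$ to $t$, are pairwise disjoint: once the algorithm moves into the $t$-containing subgraph at step $i$, the side branches at step $i+1$ lie strictly inside it and cannot overlap those discarded at step $i$. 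The sum therefore telescopes to $O(n)$---not by a geometric series, but because every vertex off the $s$--$t$ block path is charged exactly once. Your statement that ``each block of $G$ is entered \ldots\ only $O(1)$ times over the whole execution'' is the right conclusion, but it requires this disjointness-of-side-branches observation to justify it, and that is what your write-up is missing.
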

\begin{proof}

Using the notation in the proof of Theorem 1, suppose the subpath $p_0P_0\ldots p_i$ has already been printed; $p_{\ell+1}$ cannot be in \emph{subgraph}$(p_i,p_{i-1})$, so there is no need to run \emph{isInSubgraph}$(p_i,p_{i-1},t)$. Thus, \emph{adjacentPoints}$(p_{i-1},p_i)$ is the set of feasible articulation points. Moreover, if $p_{\ell+1}$ is not in \emph{subgraph}$(p_i,p)$ for all-but-one feasible articulation points, then the last of these must be $p_{i+1}$ and there is no need to run \emph{isInSubgraph}$(p_i,p_{i+1},t)$. Finally, two subgraphs rooted at $block(p_{i-1},p_i)$ can be traversed concurrently with the technique of simulated parallelization: instead of traversing the feasible subgraphs one-after-another, we maintain two copies of the \emph{isInSubgraph} function and use them to simultaneously traverse two subgraphs. We do this in serial (without the use of a parallel processor) by iteratively advancing each copy of the function in turn; if one subgraph is traversed, the corresponding copy of the function terminates and another copy is initiated to traverse the next unexplored subgraph. Thus, Algorithm 2 is structurally identical to Algorithm 1\footnote{Indeed each of the described modifications can be implemented in Algorithm~1 as well, but would not make a significant difference in its time complexity.} and prints the correct shortest path between $s$ and $t$.

Only a constant number of variables, each of size $O(\log n)$, are simultaneously used in Algorithm 2, and every function call is to a logspace function; moreover, keeping track of the internal states of two logspace functions can be done with logspace, so the space complexity of Algorithm 2 is $O(\log n)$. 


Finally, to verify the time complexity, note that by traversing two subgraphs at once, we can deduce which subgraph contains $t$ in the time it takes to traverse all subgraphs which do \emph{not} contain $t$ or $s$. Thus, each subgraph rooted at \emph{block}$(p_i,p_{i+1})$, $0\leq i\leq \ell$, which does not contain $t$ or $s$ will be traversed at most once, so the time needed to print the shortest path is of the same order as the time needed to traverse $G$ once. 
$\square$
\end{proof}

\begin{corollary}
The time and space complexity of Algorithm 2 is the best possible for the class of graphs considered.
\end{corollary}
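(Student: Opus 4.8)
\textit{Proof proposal.}
The plan is to establish matching lower bounds for both resources and then invoke Theorem~2. Two facts need to be shown: (i) every correct shortest path algorithm must, on some instance in the considered class, use $\Omega(n)$ time; and (ii) every correct shortest path algorithm must use $\Omega(\log n)$ working memory. Since Theorem~2 guarantees that Algorithm~2 meets both bounds simultaneously ($O(n)$ time, $O(\log n)$ space), this proves optimality of each resource individually, and also of the time--space product $\Theta(n\log n)$, because the two lower bounds can be exhibited on one and the same family of instances.

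For the time lower bound I would exhibit an explicit subfamily of the class on which shortest paths are long. The path graph $P_n$ on $n$ vertices, with $s$ and $t$ taken to be its two endpoints, has maximum degree $2$ and all blocks of size $2$, so it lies in the class for any parameter values that permit connected graphs of unbounded order. Its unique $s$--$t$ path has $n$ vertices, so any algorithm that prints this path performs $\Omega(n)$ write operations and hence runs in $\Omega(n)$ time; a hypothetical $o(n)$-time algorithm could not even produce its output. Together with Theorem~2, this shows the $O(n)$ running time of Algorithm~2 is asymptotically optimal.

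For the space lower bound, the point is that $\Omega(\log n)$ bits of working memory are unavoidable on $n$-vertex inputs: the output of a shortest path query is a sequence of vertex labels drawn from $\{1,\dots,n\}$, each requiring $\Theta(\log n)$ bits to name, and a deterministic algorithm with read-only input must at some moment hold such a label — equivalently, a counter or pointer bounded by $n$ — in its work memory in order to emit it; hence no $o(\log n)$-space algorithm can be correct for all $n$. Combined with the $O(\log n)$ space bound from Theorem~2, this makes Algorithm~2's space usage optimal as well. Finally, both lower bounds are witnessed by the same path-graph instances, so the time--space product is $\Omega(n)\cdot\Omega(\log n)=\Omega(n\log n)$, matched by Algorithm~2; this yields the corollary.

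The step I expect to be most delicate is the $\Omega(\log n)$ space lower bound, since sub-logarithmic space is genuinely subtle in general (witness the classical $\log\log n$ gap), so the argument must be pinned to the model of Section~2 — in particular, one must rule out that the freely-counted $O(\log n)$ input-head position is what carries the needed information, and argue that the work tape alone must encode an index into an object of size $n$. Making this fully rigorous would proceed via a pumping/crossing-sequence style argument on work-tape configurations: with $o(\log n)$ work bits there are only $n^{o(1)}$ internal configurations, too few to correctly sequence the $\Theta(n)$ distinct labels that the $P_n$ output demands, forcing an incorrect repetition. The time lower bound, by contrast, is immediate from the output-size observation and requires no such care.
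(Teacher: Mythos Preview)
Your proposal is correct and follows essentially the same approach as the paper: an $\Omega(n)$ time lower bound from the output length (the paper simply notes the shortest path may have length $\Omega(n)$, while you exhibit $P_n$ explicitly), and an $\Omega(\log n)$ space lower bound from the need to hold a pointer/label of size $\Theta(\log n)$. The paper's own proof is a two-sentence sketch that omits the crossing-sequence justification you flag as delicate; your version is strictly more careful, but not a different route.
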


\begin{proof}
Let $G$ be a graph of order $n$; the shortest path between two vertices in $G$ may be of length $\Omega(n)$ so any shortest path algorithm will require at least $\Omega(n)$ time to print the path. Moreover, a pointer to an entry in the adjacency list of $G$ has size $\Omega(\log n)$, so printing each edge of the shortest path requires at least $\Omega(n)$ space. $\square$
\end{proof}


\section{Conclusion}

We have reduced the logspace shortest path problem to biconnected graphs using techniques such as computing instead of storing, transitivity of logspace reductions, and Reingold's USTCON result. We have also proposed a linear time logspace shortest path algorithm for graphs with bounded degree and biconnected component size, using techniques such as simulated parallelization and constant-time and -space calls to functions over graphs with bounded size. 

\sloppypar Future work will be aimed at further reducing the logspace shortest path problem to triconnected graphs using SPQR-tree decomposition, and to $k$-connected graphs using branch decomposition or the decomposition of Holberg \cite{decomposition}. Another direction for future work will be to generalize Algorithm 2 by removing or relaxing the restrictions on vertex degree and biconnected component size.

\section*{Acknowledgements}

This material is based upon work supported by the National Science Foundation under Grant No. 1450681.

\end{document}